\newtheorem{theorem}{Theorem}
\def\Poi{\mathrm{Poi}}
\def\Nor{\mathrm{Nor}}
\def\Cat{\mathrm{Cat}}
\def\vMF{\mathrm{vmF}}
\def\CDF{\mathrm{CDF}}
\def\ess{\mathrm{ess}}
\def\JS{\mathrm{JS}}
\def\dtheta{\mathrm{d}\theta}
\def\dmu{\mathrm{d}\mu}
\def\eps{\epsilon}
\def\dP{\mathrm{d}P}
\def\dQ{\mathrm{d}Q}
\def\sign{\mathrm{sign}}
\def\dx{\mathrm{d}x}
\def\bbR{\mathbb{R}}
\def\KL{\mathrm{KL}}
\def\TV{\mathrm{TV}}
\def\calX{\mathcal{X}}
\def\calD{\mathcal{D}}
\def\calF{\mathcal{F}}
\def\eqdef{{:=}}
\def\bbN{\mathbb{N}}
\def\st{{\ :\ }}
\def\impformula#1{\boxed{#1}}
\title{On power chi expansions of $f$-divergences}
\author[$\star$]{Frank Nielsen}
\affil[$\star$]{Sony Computer Science Laboratories, Inc.}
\affil[$\star$]{Tokyo, Japan}
\affil[$\star$]{{\small\tt Frank.Nielsen@acm.org}}
\author[$\dagger$]{Ga\"etan Hadjeres}
\affil[$\dagger$]{Sony Computer Science Laboratories, Inc.}
\affil[$\dagger$]{Paris, France}
\affil[$\dagger$]{{\small\tt Gaetan.Hadjeres@sony.com}}
\date{}
\begin{document}
\maketitle

\begin{abstract}
We consider both finite and infinite power chi expansions of $f$-divergences derived from Taylor's expansions of smooth generators, and
elaborate on cases where these expansions yield closed-form formula, bounded approximations, 
or analytic divergence series expressions of $f$-divergences.
\end{abstract}

\noindent {\bf Keywords}: $f$-divergence, chi-squared distance, exponential family, Taylor expansions, binomial and multinomial theorems, analytic formula, bounded density ratio.
 
\section{Introduction}

\subsection{Statistical $f$-divergences}

Let $(\calX,\calF)$ be a measurable space~\cite{measure-2013}, where $\calX$ denotes the sample space and $\calF$ 
is a $\sigma$-algebra of measurable events on $\calX$.
For a convex function $f:(0,\infty)\rightarrow \bbR$, strictly convex at $1$, the {\em $f$-divergence}~\cite{Csiszar-1967,Liese-2006} between two probability measure $P$ and $Q$ is defined  by:
\begin{equation}\label{eq:ifm}
I_f(P:Q) \eqdef \left\{
\begin{array}{ll}
\int_\calX  f\left(\frac{\dQ}{\dP}\right) \dP,  & Q\ll P\\
+\infty & Q\not\ll P.
\end{array}
\right.,
\end{equation}
where $Q\ll P$ denoting that $Q$ is absolutely continuous~\cite{measure-2013} with respect to $P$.
Notice that the integral defining the $f$-divergence may potentially diverge: In that case, we have $I_f(P:Q)=+\infty$.
The strict convexity  of the generator $f$ at $1$ is required for satisfying the law of the indiscernibles of the $f$-divergence: $I_f(P:Q)=0$ if and only if $P=Q$ almost everywhere (a.e.).
Since it follows from Jensen's inequality that $I_f(P:Q)\geq f(1)$, we assume in the remainder that $f(1)=0$.
Moreover, let $f_\alpha(u)=f(u)+\alpha(u-1)$. We have $I_{f_\alpha}=I_f$ for any $\alpha\in\bbR$.
Thus we fix the generator $f_\alpha$ that satisfies $f'_\alpha(1)=0$.
In information geometry, a {\em standard} $f$-divergence~\cite{IG-2016} further satisfies the scaling normalization $f''_\alpha(1)=1$ (see Appendix~\ref{sec:stdfdiv} for details).

When the probability measures $P$ and $Q$ are defined on the same measure space $(\calX,\calF,\mu)$ where $\mu$ is a base $\sigma$-finite positive measure
(with $P,Q\ll \mu$), let $p$ and $q$ denote their respective Radon-Nikodym densities~\cite{measure-2013}: 
$p=\frac{\dP}{\dmu}$ and $q=\frac{\dQ}{\dmu}$.
Then the $f$-divergence between $p$ and $q$ is then defined by
\begin{equation}\label{eq:if}
I_f^\mu(p:q) \eqdef \int_\calX p(x) f\left(\frac{q(x)}{p(x)}\right) \dmu(x) \geq f(1)=0.
\end{equation}
We adopt the following conventions for the mathematically undefined expressions in Eq.~\ref{eq:if}:
$$
f(0)=\lim_{u\rightarrow 0^+} f(u),\quad 0 f\left(\frac{0}{0}\right)=0,\quad
 0f\left(\frac{a}{0}\right)=\lim_{u\rightarrow 0^+} \left(\frac{a}{u}\right).
$$
It can be shown that the $f$-divergence between $P$ and $Q$  is {\em independent} of the choice of the dominating measure $\mu$ (in particular, one can choose $\mu=\frac{P+Q}{2}$. Thus we write concisely $I_f(p:q)$ for $I_f^\mu(p:q)$. 
In the remainder, we consider all measures dominated by the base measure $\mu$ (e.g., the Lebesgue or counting measures), and
function $f$ shall be called the {\em generator} of the divergence (with $f(1)=0$ and $f'(1)=0$).
 
Let us give some examples of $f$-divergences in disguise:
The {\em total variation} distance $\TV(p,q)=\frac{1}{2}\int |p(x)-q(x)| \dmu(x)$ is a metric $f$-divergence obtained for the generator $f_\TV(u)=|u-1|$ which is strictly convex at $u=1$ (and only convex otherwise).
Another example is the {\em chi-squared} distance~\cite{chisquared-1900} which is $f$-divergences obtained for the generator $f_{\chi^2}(u)=(u-1)^2$:
$$
\chi^2(p:q) =  \int \frac{(q(x)-p(x))^2}{p(x)} \dmu(x).
$$

In theory, the definite integral of the $f$-divergences of Eq.~\ref{eq:if} may be computed using Risch semi-algorithm of symbolic integration~\cite{SI-2006}.
However, this semi-algorithm  requires to test whether some expressions are equivalent to zero or not. 
It is not known whether such an algorithm exists or not. 
Worse, when the absolute value function belongs to the set of the elementary functions,
it can be proved that no such algorithm exists.\footnote{\url{https://en.wikipedia.org/wiki/Risch_algorithm}}
Thus in practice, one often relies on stochastic Monte Carlo integrations for estimating $f$-divergences, or on various approximation bounds~\cite{GMMLSE-2016}.

\subsection{Power chi pseudo-distances: Closed-form formula for affine exponential families}

Let us define the following {\em $i$-th power chi pseudo-distance}:
\begin{equation}
\chi_i^\pm(p:q) = \int \frac{(q(x)-p(x))^i}{p(x)^{i-1}} \dmu(x), \quad i\in\{2, 3,\dots\}.
\end{equation}
We have $\chi_2^\pm(p:q)=\chi^2(p:q)$.
Although $\chi_i^\pm$ is a distance for even integer $i\geq 2$,
it is {\em not} a distance for odd $i$ as $\chi_i^\pm$ maybe negative (hence the notation $\chi_i^\pm$) and even fail to satisfy the law of the indiscernibles (i.e., $\chi_i^\pm(p:q)=0$ if and only if $p=q$).
To see this, consider binary categorical distributions $p=(\lambda_p,1-\lambda_p)$ and $q=(\lambda_q,1-\lambda_q)$.
Then we have
$$
\chi_i^\pm(p:q) = \frac{(\lambda_q-\lambda_p)^i}{\lambda_p^{i-1}} + \frac{(1-\lambda_q-1+\lambda_p)^i}{(1-\lambda_p)^{i-1}}.
$$
When $i$ is odd, we get
$$
\chi_i^\pm(p:q) = (\lambda_q-\lambda_p)^i  \left( \frac{1}{\lambda_p^{i-1}} -    \frac{1}{(1-\lambda_p)^{i-1}})  \right).
$$
Thus for $\lambda_p=1-\lambda_p=\frac{1}{2}$, we have $\chi_i^\pm(p:q)=0$ for odd $i$, {\em independent} of the values of $\lambda_q$.

In the remainder, we term the $\chi_i^\pm$'s the $i$-order chi {\em pseudo-distances} or the {\em power chi distances} for short.

Notice that the $\chi_i^\pm$'s are $f$-divergences for even integer $i$ for the generators $f_{\chi_i^\pm}(u)=(u-1)^{i}$.

Let us further extend the definition as follows (for any $\lambda\not=0$):  
\begin{equation}
\chi_{i,\lambda}^\pm(p:q) \eqdef \int \frac{(q(x)-\lambda p(x))^i}{p(x)^{i-1}} \dmu(x), \quad i\in\{2, 3,\dots\}.
\end{equation}
We note that $\chi_{i,1}^\pm(p:q)=\chi_{i}^\pm(p:q)$.
We shall define $\lambda^j=\sign(\lambda)^j |\lambda|^j$ for any integer $j\in\bbN$, where 
$$
\sign(\lambda)=\left\{
\begin{array}{ll}
1 & \mbox{if $\lambda>0$},\\
-1 & \mbox{if $\lambda<0$},\\
0 & \mbox{if $\lambda=0$}. 
\end{array}
\right.
$$

An full natural exponential family $\{f(x;\theta)\dmu\}$ is a set of probability distributions with densities with respect to a base measure $\mu$
written canonically~\cite{EF-2009,ECE-EF-2010} as $f(x;\theta)=\exp(t(x)^\top\theta-F(\theta)+k(x)) 1_{\calX}(x)$,
where $\calX$ is the support, $t(x)$ a vector of sufficient statistics, $\theta$ the natural parameter, $k(x)$ an auxiliary carrier measure term and $F(\theta)$ the log-normalizer (also called log-partition function or cumulant function). 
The order $D$ of the family is the dimension of $t(x)$ or $\theta$.
The natural parameter space~\cite{EF-2009} is $\Theta = \{\theta \st \int f(x;\theta)\dmu<\infty\}$ (with $\Theta\subset \bbR^D$).
Following~\cite{ChiFDiv-2014}, we report a closed-form formula for exponential families~\cite{EF-2009} 
with a natural parameter space being {\em affine}. 
These exponential families are termed {\em Affine Exponential Families}, or AEF for short.

Affine exponential families~\cite{EF-2009} include the isotropic Gaussian family, the Poisson family, the multinomial family, the von Mises-Fisher distributions\footnote{See~\cite{EFvonMises-2013} for the canonical parameterization of this directional statistics exponential family  which differs from the usual polar coordinate parameterization.}, etc.
See Table~\ref{tab:aef}.

\begin{table}
\centering

\begin{eqnarray*}
\Poi(\lambda) &:& f(x;\lambda)=\frac{\lambda^x e^{-\lambda}}{x!}, \lambda>0, x\in\{0, 1, ...\}\\
\Cat(p) &:& f(x;p) = p_0^{x_1}\ldots p_{d}^{x_{d}}, x_i\in\{0,1\}, \sum_{i=0}^{d} x_i=1  \\ 
\vMF(\theta) &:& f(x;\theta) = \frac{\exp(x^\top\theta)}{{}_0F_1(;\frac{d}{2};\frac{\|\theta\|^2}{4})} \\
\Nor(m,I) &:& f(x;\mu)= (2\pi)^{-\frac{d}{2}} e^{-\frac{1}{2}(x-m)^\top (x-m)}, m\in\mathbb{R}^d, x\in\mathbb{R}^d\\
\end{eqnarray*}

\begin{tabular}{lllll}
Density & $\theta$ & Domain & Log-normalizer $F(\theta)$ & Base measure\\ \hline\hline
Categorical & $\theta_i=\log\frac{p_i}{p_0}, i\in\{1,\ldots,d\}$ & $\mathbb{R}^d$  & $\log(1+\sum_{i=1}^d e^{\theta_i})$ & $\mu_c$\\
Poisson & $\log\lambda$ & $\mathbb{R}$ & $e^\theta$ &   $\mu_c$\\
Isotropic Gaussian & $m$ & $\mathbb{R}^d$ & $\frac{1}{2}\theta^\top\theta$ & $\mu_L(d)$\\ 
von-Mises Fisher & $\theta$ & $\mathbb{R}^{d-1}$ & $\log {}_0F_1(;\frac{d}{2};\frac{\|\theta\|^2}{4})$ & $\mu_S(d-1)$\\
\hline
\end{tabular}

\caption{Examples of exponential families with affine natural parameter spaces.
$\mu_c$ denotes the counting measure, $\mu_L(d)$ the Lebesgue measure on $\mathbb{R}^d$, and $\mu_S(d-1)$ the uniform measure on the $(d-1)$-dimensional sphere.
$_0F_1$ is a generalized hypergeometric series~\protect\cite{EFvonMises-2013}.
\label{tab:aef}}
\end{table}

Let $p=f(x;\theta_p)$ and $q=f(x;\theta_q)$ be two densities belonging to the same affine exponential family.
Then we have:
\begin{equation}\label{eq:cfchipm}
\impformula{
\chi^\pm_{i,\lambda}(p:q) = 
 \sum_{j=0}^i   (-\lambda)^{i-s} \binom{i}{j}   \exp\left({F\left((1-j)\theta_p+j\theta_q\right)-\left((1-j)F(\theta_p)+jF(\theta_q)\right)}\right).
}
\end{equation}
This can be easily seen be writing $\chi_i^\pm$  as $\chi_i^\pm(p:q) = \int p(x)^{1-i}(q(x)-p(x))^i  \dmu(x)$ and plugging the canonical densities of exponential families.
The binomial coefficients can be computed efficiently using Pascal's triangle.

Furthermore, this formula can be extended to $\chi^\pm_i(p:\sum_{i=1}^{l} w_i q_i)$ where $\sum_{i=1}^{l} w_iq_i$ is a mixture of exponential families by using the multinomial expansion~\cite{StatMinkowski-2019} instead of the binomial expansion:

\begin{eqnarray*}
\chi_{i,\lambda}^\pm\left(p:\sum_{i=1}^{l} w_i q_i\right) &=& \int \frac{(\sum_{i=1}^{l} w_i q_i(x)-\lambda p(x))^i}{p(x)^{i-1}} \dmu(x),\\
 &=& \int  \left( \sum_{s=0}^i \binom{i}{s} (-\lambda p(x))^{1-s} (\sum_{u=1}^{l} w_u q_u(x))^{s}  \right) \dmu(x),\\
 &=&  \sum_{s=0}^i \binom{i}{s}  (-\lambda)^{i-s}
 \sum_{\substack{\sum_{i=1}^l \alpha_i=s\\ \alpha_i\in\bbN}} 
\binom{\alpha}{\alpha_1,\ldots,\alpha_l} \int p(x)^{1-s} \prod_{u=1}^l (w_u q_u(x))^{\alpha_j} \dmu(x).
\end{eqnarray*}

Following~\cite{StatMinkowski-2019}, when $p(x)=f(x;\theta)$ and $q_u(x)=f(x;\theta_u)$ are densities of the same affine exponential family, we have
\begin{eqnarray}
\lefteqn{\int p(x)^{1-s} \prod_{u=1}^l (w_u q_u(x))^{\alpha_j} \dmu(x)=}\nonumber\\
&& \exp\left(F\left((1-s)\theta+\sum_{u=1}^l \alpha_u\theta_u\right)
- \left((1-s)F(\theta)+\sum_{u=1}^l \alpha_u F(\theta_u)\right)
 \right).
\end{eqnarray}

\begin{theorem}
The power chi pseudo-distances between a member and a mixture of an affine exponential family can be calculated in closed-form.
\end{theorem}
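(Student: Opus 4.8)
The plan is to follow the computation displayed just above the statement, making explicit why each manipulation is legitimate and why the resulting expression is genuinely closed-form. Write $m=\sum_{u=1}^l w_u q_u$ for the mixture, with $q_u=f(\cdot\,;\theta_u)$ and $p=f(\cdot\,;\theta)$ members of a common affine exponential family with log-normalizer $F$, carrier $k$, and natural parameter space $\Theta$. First I would expand the numerator $(m(x)-\lambda p(x))^i$ by the binomial theorem and each power $\big(\sum_u w_u q_u(x)\big)^s$ by the multinomial theorem; since both are finite sums, I can interchange them freely with the integral, reducing $\chi^\pm_{i,\lambda}(p:m)$ to a finite linear combination (with coefficients built from $(-\lambda)^{i-s}$, binomial and multinomial coefficients, and powers of the $w_u$) of the elementary integrals $\int p(x)^{1-s}\prod_{u=1}^l q_u(x)^{\alpha_u}\,\dmu(x)$ over all $(\alpha_1,\dots,\alpha_l)\in\bbN^l$ with $\sum_u\alpha_u=s$, and $s$ ranging in $\{0,\dots,i\}$.

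The crux is to evaluate each such integral in closed form, and here the affine hypothesis on $\Theta$ enters. Substituting the canonical form $f(x;\vartheta)=\exp(t(x)^\top\vartheta-F(\vartheta)+k(x))$ gives
\[
p(x)^{1-s}\prod_{u=1}^l q_u(x)^{\alpha_u}
=\exp\!\Big(t(x)^\top\theta^\star-\big((1-s)F(\theta)+\textstyle\sum_u\alpha_u F(\theta_u)\big)+\big(1-s+\textstyle\sum_u\alpha_u\big)k(x)\Big),
\]
where $\theta^\star=(1-s)\theta+\sum_u\alpha_u\theta_u$. Two bookkeeping identities make this collapse: the carrier coefficient $1-s+\sum_u\alpha_u$ equals $1$ because $\sum_u\alpha_u=s$, so $k(x)$ appears with exactly the exponent it carries in a genuine density of the family; and the coefficients $(1-s,\alpha_1,\dots,\alpha_l)$ of $\theta^\star$ also sum to $1$, i.e. $\theta^\star$ is an affine combination of $\theta,\theta_1,\dots,\theta_l$. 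Since $\Theta$ is affine by assumption, $\theta^\star\in\Theta$, hence $\int\exp(t(x)^\top\theta^\star+k(x))\,\dmu(x)=\exp(F(\theta^\star))<\infty$, and therefore
\[
\int p(x)^{1-s}\prod_{u=1}^l q_u(x)^{\alpha_u}\,\dmu(x)=\exp\!\Big(F\big((1-s)\theta+\textstyle\sum_u\alpha_u\theta_u\big)-\big((1-s)F(\theta)+\textstyle\sum_u\alpha_u F(\theta_u)\big)\Big),
\]
which is precisely the formula quoted before the theorem (after absorbing the $w_u^{\alpha_u}$ into the outer coefficients).

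Substituting these evaluations back, $\chi^\pm_{i,\lambda}(p:m)$ is a finite sum of terms, each a known coefficient times an exponential of a difference of values of $F$ at explicit affine combinations of the natural parameters; since $F$ is available in closed form for the affine families of Table~\ref{tab:aef}, the whole expression is closed-form, which proves the theorem. I expect the only real obstacle to be the verification in the middle paragraph — namely the two ``sum-to-one'' identities that guarantee $k(x)$ is preserved and $\theta^\star\in\Theta$ — since the binomial/multinomial expansion and the finite interchange with the integral are routine. A minor additional point worth making explicit is the term $s=0$ (the pure $(-\lambda p)^i$ contribution), which is immediate because $\int p\,\dmu=1$.
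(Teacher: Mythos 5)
Your proposal is correct and follows essentially the same route as the paper: a binomial-then-multinomial expansion reducing $\chi^\pm_{i,\lambda}\bigl(p:\sum_u w_u q_u\bigr)$ to a finite linear combination of the elementary integrals $\int p^{1-s}\prod_u q_u^{\alpha_u}\,\dmu$, each evaluated as an exponential of differences of the log-normalizer at affine combinations of the natural parameters. Your explicit check of the two sum-to-one identities (carrier exponent equal to $1$, and $(1-s,\alpha_1,\dots,\alpha_l)$ summing to $1$ so that $\theta^\star$ stays in the affine parameter space) is precisely the content the paper delegates to the cited reference, so nothing is missing.
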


For isotropic Gaussian distributions $p\sim N(m_1,I)$ and $q\sim N(m_2,I)$ (where $I$ denotes the identity matrix), we get:
\begin{equation} \label{eq:hochigaussian}
\chi_i^\pm(p:q) = \chi_i^\pm(m_1:m_2) = \sum_{j=0}^i (-1)^{i-j} \binom{i}{j}   \exp\left(\frac{j(j-1)}{2}\|m_1-m_2\|^2\right).
\end{equation}
 
For any prescribed integer $i$, $\chi_i^\pm(m_1:m_2)<\infty$ and we can bound  $\chi_i^\pm$ as follows:
$$
|\chi_i^\pm(m_1:m_2)| \leq 2^i \exp\left(\frac{i(i-1)}{2}\|m_1-m_2\|^2\right) <\infty.
$$

Consider $\|m_1-m_2\|=1$ (e.g., in 1D, $m_1=0$ and $m_2=1$), then we have
\begin{eqnarray*}
\chi_2^\pm(0:1)&\simeq& 1.718281828459045,\\
\chi_3^\pm(0:1)&\simeq& 13.930691437810532,\\
\chi_4^\pm(0:1)&\simeq& 336.3963367707387,\\
\chi_5^\pm(0:1)&\simeq& 20186.99437829033,\\
\chi_6^\pm(0:1)&\simeq& 3142544.0730946246,\\
\chi_7^\pm(0:1)&\simeq& 1.2963817005597024\ 10^{9},\\
\chi_8^\pm(0:1)&\simeq& 1.4357968646042915\ 10^{12},\\
\chi_9^\pm(0:1)&\simeq& 4.298262439031654\ 10^{15},\\
\chi_{10}^\pm(0:1)&\simeq& 3.489122366600497\ 10^{19},\\
\mbox{etc.}
\end{eqnarray*}

We can notice that the  $\chi$ pseudo-distances diverge very quickly with the increase of the order.

\section{Power chi expansions of $f$-divergences from Taylor's theorem}

\subsection{Power chi expansions}

Consider the Taylor's $k$-th order expansions~\cite{FdivTaylor-2002} of the smooth generator $f$ around point $c$:
$$
f(u)= P_k(u) +  R_{k}(u),
$$
where $P_k(u)$ is the $k$-th order {\em Taylor polynomial}:
$$
P_k(u) = \sum_{i=0}^k \frac{f^{(i)}(c)}{i!} (u-c)^i,
$$
and
$R_{k}(u)=f(u)-P_k(u)$ is the {\em Taylor remainder} term.
There are many ways to express (e.g., Peano, Lagrange, Cauchy, or integral forms) and bound the remainder $R_k$ in a Taylor expansion~\cite{FdivTaylor-2002,FdivTaylor-2005}.

In particular, when $|f^{(k+1)}(u)|\leq M$ for any $u\in (c-r,c+r)$ (with $r>0$), then we have
$$
|R_k(u)|\leq M \frac{r^{k+1}}{(k+1)!}.
$$ 

Letting $c=1$ and $u=\frac{q(x)}{p(x)}$, and using the fact that $f(1)=f'(1)=0$, we get
$$
p(x) f\left(\frac{q(x)}{p(x)}\right) =  \sum_{i=2}^k \frac{f^{(i)}(1)}{i!} \frac{(q(x)-p(x))^i}{p(x)^{i-1}} 
+ R_k\left(\frac{q(x)}{p(x)}\right).
$$

Carrying the integral over the support $\calX$, we end up with the following {\em power chi expansion} of the $f$-divergence:
$$
I_f(p:q) = \sum_{i=2}^k \frac{f^{(i)}(1)}{i!} \chi_i^\pm(p:q) + \int_\calX  R_k\left(\frac{q(x)}{p(x)}\right) \dmu(x).
$$ 

Define the {\em $k$-th order chi expansion} of the $f$-divergence:
\begin{equation}
I_{f,k}^\chi(p:q) \eqdef \sum_{i=2}^k \frac{f^{(i)}(1)}{i!} \chi_i^\pm(p:q).
\end{equation}
It is ``chinomial'' of order $k$.
Let us express the power chi remainder as:
$$
R_{k}^\chi(p:q)=I_f(p:q) - I_{f,k}^\chi(p:q).
$$

Then we can approximate the $f$-divergence by its $k$-th order chi expansion, and bound the error as:
\begin{equation}
\impformula{
\left|I_f(p:q) - I_{f,k}^\chi(p:q) \right| = \left| R_{k}^\chi(p:q) \right| \leq \left| \int_\calX  R_k\left(\frac{q(x)}{p(x)}\right) \dmu(x)  \right|.
}
\end{equation}

\subsection{Some illustrating examples of power chi expansions}

Let us report the $k$-th order chi expansions for the  Kullback-Leibler divergence, the Jeffreys divergence and the Jensen-Shannon divergence:

\begin{itemize}
	\item The Kullback-Leibler (KL) divergence\footnote{In the literature, a divergence is either a statistical distance or a smooth parameter distance used to define an information-geometric manifold~\cite{IGDiv-2010}.} between two densities $p$ and $q$ is defined by
$$
\KL(p:q)=\int p(x)\log\frac{p(x)}{q(x)} \dmu(x).
$$
The KL divergence is a $f$-divergence obtained for the generator $f_\KL(u)=-\log u$.
It is an unbounded divergence 
that may potentially diverge even when distributions are defined on the same support.\footnote{
Let $\calX=(0,1)$ and two densities (with respect to Lebesgue measure $\dx$) $p_1(x)=1$ and $p_2(x)=c e^{-1/x}$ with $c^{-1}=\int_0^1 e^{-1/x}\dx \simeq 0.148$ the normalizing constant. Then
$\KL(p_1:p_2) =  \int_0^1 x_1 \log \frac{p_1(x)}{p_2(x)} \dx =-\log c + \int_0^1 \frac{1}{x}\dx = \infty$.
}
We have 
$$
f^{(i)}_\KL(u) ={(-1)}^{i} (i-1)! u^{-i},\quad i\geq 2,
$$ 
and it follows the $k$-order power chi expansion of the KL divergence:

\begin{equation}
\impformula{
\KL_k^\chi(p:q) = \sum_{i=2}^k \frac{(-1)^i}{i} \chi_i^\pm(p:q).
}
\end{equation}

The {\em reverse $f$-divergence} $I_f^r(p:q)=I_f(q:p)$ is obtained for the {\em conjugate} generator $f^r(u)=uf\left(\frac{1}{u}\right)$.
The {\em reverse KL divergence} is a $f$-divergence for the generator
$f_{\KL^r}(u)=u\log u$.
The derivatives of the generator are:
$$
f^{(i)}_{\KL^r}(u) ={(-1)}^{i} (i-2)! u^{-(i-1)},\quad i\geq 2.
$$ 

\begin{equation}
\impformula{
{\KL^r}_k^\chi(p:q) = \sum_{i=2}^k  {(-1)}^{i} \frac{1}{i(i-1)} \chi_i^\pm(p:q).
}
\end{equation}

In general, the derivatives of the conjugate generator are
$$
{f^r}^{(i)}(u) = u \left( \frac{d^i}{\mathrm{d}u^i} f\left(\frac{1}{u}\right)\right) + i\left(\frac{d^{i-1}}{\mathrm{d}u^{i-1}} f\left(\frac{1}{u}\right) \right), \quad  i\geq 1. 
$$
We get a closed-form expression by applying the formula of Fa\`{a} di Bruno~\cite{diBruno-2005}:

\begin{eqnarray*}
\lefteqn{{\frac{d^n}{dx^n}} f(g(x))=}\\
&&\sum_{1\cdot m_1+2\cdot m_2+3\cdot m_3+\cdots+n\cdot m_n=n} 
\frac{n! \  f^{(m_1+\cdots+m_n)}(g(x))  \prod_{j=1}^n\left(g^{(j)}(x)\right)^{m_j}}{m_1!\,1!^{m_1}\,m_2!\,2!^{m_2}\,\cdots\,m_n!\,n!^{m_n}}.
\end{eqnarray*}

\item The Jeffreys divergence is the following symmetrization of the KL divergence:
$$
J(p:q) = \KL(p:q)+ \KL(q:p) = \int \left(p(x)-q(x)\right)\log \frac{p(x)}{q(x)} \dmu(x).
$$
It is a $f$-divergence obtained for the generator  $f_J(u)=(u-1)\log u$ (with $f_J(1)=0$, $f_J'(u)=1-\frac{1}{u}+\log u$ and $f_J'(1)=0$), and the higher-order derivatives of the generator are
$$
f_J(u)^{(i)} = (-1)^{i} (i-2)! (u+i-1) \frac{1}{u^i}, \quad i\geq 2.
$$
Thus the $k$-order power chi expansion for the Jeffreys divergence is:
$$
\impformula{
J_k^\chi(p:q) = \sum_{i=2}^k (-1)^{i}  \frac{1}{i-1}  \chi_i^\pm(p:q).
}
$$

\item The Jensen-Shannon (JS) divergence~\cite{JS-1991} is defined by:
\begin{eqnarray}
\JS(p,q) &=& \frac{1}{2} \left( \KL\left(p:\frac{p+q}{2}\right) + \KL\left(q:\frac{p+q}{2}\right)\right)\\
&=& \frac{1}{2}\int \left(p(x)\log \frac{2p(x)}{p(x)+q(x)} +  q(x)\log \frac{2q(x)}{p(x)+q(x)}\right)\dmu(x),\\
&=& h\left(\frac{p+q}{2}\right)-\frac{h(p) +h(q) }{2},\label{eq:jsh}
\end{eqnarray}
where $h(p)=-\int p(x)\log p(x)\dmu(x)$ denotes Shannon differential entropy.
It is a bounded symmetrization\footnote{Indeed,  we have $p(x)\log \frac{2p(x)}{p(x)+q(x)}\leq p(x)\log 2$, and hence $\KL\left(p:\frac{p+q}{2}\right)\leq\log 2$.} of the KL divergence (i.e., $\JS(p,q)\leq \log 2$) obtained for the generator 
$$
f_\JS(u)=-(u+1)\log \frac{1+u}{2} + u\log u.
$$
The JS divergence gained interest in Deep Learning (DL), notably with the  Generative Adversarial Network (GAN) architecture~\cite{GAN-2014,fgan-2016}.
We have 
$$
f^{(i)}_\JS(u) = (-1)^{i-2} (i-2)! \left(\frac{1}{u^{i-1}}-\frac{1}{(u+1)^{i-1}}\right),\quad i\geq 2,
$$ 
and 
it follows the $k$-order power chi expansion of the JS divergence:
\begin{equation}
\impformula{
\JS_k^\chi(p:q) = \sum_{i=2}^k (-1)^{i-2} \frac{1}{i(i-1)} \left(1-\frac{1}{2^{i-1}}\right)  \chi_i^\pm(p:q).
}
\end{equation}

\end{itemize}

Note that the power chi expansions (e.g., $\JS_k^\chi$) are not distances since they do not satisfy the law of the indiscernibles.


There are many other $f$-divergences, some of them are not standard.\footnote{A {\em standard $f$-divergence}~\cite{IG-2016} satisfies $f(1)=f'(1)=0$ with $f''(1)=1$.}
For example, the harmonic divergence $H(p:q)=\int \frac{2p(x)q(x)}{p(x)+q(x)}\dmu(x)$ is an uncalibrated $f$-divergence 
 obtained for the generator $f_H(u)=\frac{2u}{u+1}$. Indeed, in that case, we have $f_H(1)=1$ and $f_H'(1)=\frac{1}{2}$.
The harmonic divergence is lower bounded by $f_H(1)=1$ and upper bounded by $2$. 
The higher-order derivatives are $f_H^{(i)}(u)=2(-1)^{i+1} i! \frac{1}{(u+1)^{i+1}}$.
The power chi expansion yields
\begin{equation}
H_k^\chi(p:q) =  1 + \sum_{i=2}^k (-1)^{i+1} \frac{1}{2^{i}}   \chi_i^\pm(p:q).
\end{equation}

Finally, the $f$-divergence defined for the generator 
$f_{\mathrm{exp}}(u)=e^u-eu$ (with $f_{\mathrm{exp}}(1)=0$ and $f_{\mathrm{exp}}'(1)=0$) has very simple higher-order derivatives: 
$f_{\mathrm{exp}}^{(i)}(u)=\exp(u)$ for $i\geq 2$.
Therefore the $k$-th power chi expansion is
$$
I_{f_{\mathrm{exp}}}^\chi(p:q)  = E_k^\chi(p:q)= \sum_{i=2}^k  \frac{e}{i!}   \chi_i^\pm(p:q).
$$
We call this $f$-divergence the exponential divergence for short.

\subsection{Finite  chi expansions for $f$-divergences with polynomial generators}

\begin{theorem}[Polynomial generator]\label{thm:polygen}
The $f$-divergence for any polynomial generator $f(u)=\sum_{j=0}^d a_ju^j$ has a finite chi expansion.
When the distributions are members of an affine exponential family, the $f$-divergence is calculated in closed-form.
\end{theorem}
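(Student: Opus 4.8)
The plan is to reduce everything to a Taylor expansion argument combined with the closed-form formula \eqref{eq:cfchipm} already established for the power chi pseudo-distances on affine exponential families. First I would observe that a polynomial generator $f(u)=\sum_{j=0}^d a_j u^j$ is a $C^\infty$ function, so its Taylor expansion around $c=1$ is exact at order $k=d$: the Taylor polynomial $P_d(u)$ equals $f(u)$ identically and the remainder $R_d(u)$ vanishes. Hence the general power chi expansion derived above terminates, i.e. $I_f(p:q) = I_{f,d}^\chi(p:q) = \sum_{i=2}^d \frac{f^{(i)}(1)}{i!}\,\chi_i^\pm(p:q)$, provided we have first normalized the generator so that $f(1)=f'(1)=0$ (which costs nothing since $I_f$ is invariant under $f\mapsto f+\alpha(u-1)+\beta$ for the purpose of subtracting off the constant and linear terms, as recalled in the introduction). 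This establishes the finite chi expansion claim, with the coefficients being simply the normalized Taylor coefficients $a_i$ for $i\ge 2$ up to the factor $f^{(i)}(1)/i!$; equivalently one may re-expand $f$ directly in powers of $(u-1)$ and read off the coefficients.

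Next, for the closed-form claim, I would invoke the already-proved formula \eqref{eq:cfchipm}: when $p=f(x;\theta_p)$ and $q=f(x;\theta_q)$ belong to the same affine exponential family, each $\chi_i^\pm(p:q)$ (take $\lambda=1$) is given by a finite sum of exponentials of differences of the log-normalizer $F$ evaluated at the affine combinations $(1-j)\theta_p+j\theta_q$, which all lie in the affine natural parameter space by the AEF assumption. Substituting these closed forms into the finite expansion $I_f(p:q)=\sum_{i=2}^d \frac{f^{(i)}(1)}{i!}\chi_i^\pm(p:q)$ yields a closed-form expression for the $f$-divergence, namely a finite double sum of terms $\frac{f^{(i)}(1)}{i!}(-1)^{i-j}\binom{i}{j}\exp\!\big(F((1-j)\theta_p+j\theta_q)-(1-j)F(\theta_p)-jF(\theta_q)\big)$. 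For the general affine-exponential-family members (not necessarily Gaussian) this is the final formula; for the isotropic Gaussian one recovers \eqref{eq:hochigaussian}.

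I do not expect a serious obstacle here: the only points requiring care are bookkeeping ones. One must check that the normalization step $f\mapsto f_\alpha$ with $f_\alpha(1)=f_\alpha'(1)=0$ does not change the divergence and does not inflate the degree — it does not, since subtracting an affine function preserves degree $d$ and only modifies $a_0,a_1$. One should also note the edge cases: if $d\le 1$ the generator is affine, $I_f\equiv 0$, and the expansion is the empty sum; if $f$ is not strictly convex at $1$ the object is still a well-defined (signed) integral and the formula holds formally, though it is not a bona fide $f$-divergence. Finally I would remark that the finiteness of the expansion is precisely what makes the power chi series \emph{converge trivially} in the polynomial case, in contrast to the transcendental generators (KL, Jeffreys, JS, exponential) treated earlier, where the expansion is genuinely infinite and convergence must be argued separately.
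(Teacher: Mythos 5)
Your proposal is correct and follows essentially the same route as the paper: the Taylor remainder vanishes identically for a degree-$d$ polynomial, so the general power chi expansion terminates at order $d$, and the closed form on affine exponential families then follows by substituting the already-established formula for $\chi_i^\pm$. The paper's proof merely makes the coefficient explicit as $\frac{f^{(i)}(1)}{i!}=\sum_{j=i}^d a_j\binom{j}{i}$, which is exactly the re-expansion in powers of $(u-1)$ that you mention in passing.
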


\begin{proof}
Let $f(u)=\sum_{j=0}^d a_ju^j$ be a convex polynomial generator for the $f$-divergence of order $d$ (with coefficients satisfying $f(1)=f'(1)=0$).
We have $f^{(i)}(u)=0$ when $i>d$, and $f^{(i)}(u)= \sum_{j=i}^d  a_j \frac{j!}{(j-i)!} u^{j-i}$ for $0\leq i \leq d$.
It follows that $I_f$ has a finite chi expansion:
\begin{equation}
\impformula{
I_f(p:q)= \sum_{i=2}^d \left(\sum_{j=i}^d  a_j  \binom{j}{i}\right) \chi^\pm_i(p:q).
}
\end{equation}
\end{proof}

The reverse $f$-divergence $I_f^r(p:q)=I_f(q:p)$ is obtained for the conjugate generator $f^r(u)=uf\left(\frac{1}{u}\right)$.
The conjugate polynomial generator is $f(u)=\sum_{j=0}^d a_ju^{1-j}=\sum_{j=0}^d a_j \frac{1}{u^{j-1}}$ (called a Laurent polynomial).

A necessary condition for a polynomial to be convex on $\bbR_+$ is to have the leading coefficient positive.

\subsection{The case of $\alpha$-divergences}

The $\alpha$-divergences for $\alpha\in\bbR\backslash\{\pm 1\}$ are defined by:
$$
I_\alpha(p:q)=\frac{4}{1-\alpha^2} \left(1-\int p^{\frac{1-\alpha}{2}}(x) q^{\frac{1+\alpha}{2}}(x) \dmu(x)\right)=I_{-\alpha}(q:p).
$$
We have $\lim_{\alpha\rightarrow -1} I_\alpha(p:q)=\KL(p:q)$ and $\lim_{\alpha\rightarrow 1} I_\alpha(p:q)=\KL^r(p:q)=\KL(q:p)$ (reverse KL divergence).

The $\alpha$-divergences 
are $f$-divergences for the generator $f_\alpha(u)=\frac{4}{1-\alpha^2}(1-u^{\frac{1+\alpha}{2}})$:
\begin{eqnarray*}
I_\alpha(p:q) &=& I_{f_\alpha}(p:q)=\int p(x)\frac{4}{1-\alpha^2}\left( 1 - \left(\frac{q}{p}\right)^{\frac{1+\alpha}{2}}\right)\dmu(x),\\
&=& \frac{4}{1-\alpha^2} \left(1-\int p^{\frac{1-\alpha}{2}}(x) q^{\frac{1+\alpha}{2}}(x)\dmu(x) \right).
\end{eqnarray*}

The derivatives for the {\em $\alpha$-generators} are
$$
f_\alpha^{(i)}(u)=-\frac{2}{1-\alpha}\binom{\frac{1+\alpha}{2}}{i}  u^{\frac{1+\alpha}{2}-i}, \quad i\geq 2,
$$
where
$$
\binom{\gamma}{i} \eqdef \left\{
\begin{array}{ll}
\frac{\gamma (\gamma-1) \times \dots (\gamma-i+1)}{i!} & \mbox{if $i\leq \gamma$},\\
0 & \mbox{otherwise}.
\end{array}
\right.
$$
are the {\em generalized binomial coefficients} for any $\gamma\in\bbR$.
Symmetric divergence $I_0$ is known as the squared Hellinger distance:
$$
I_0(p:q) = 4 \left(1-\int \sqrt{p(x)} \sqrt{q(x)}\dmu(x) \right)
 =  2\int (\sqrt{p(x)}-\sqrt{q(x)})^2 \dmu(x)=I_0(q:p) .
$$

Therefore it follows that the power chi expansions for the $\alpha$-divergences is:
$$
I_{\alpha,k}^\chi(p:q) =  \sum_{i=2}^k  -\frac{2}{1-\alpha} \frac{1}{i!} \binom{\frac{1+\alpha}{2}}{i} \chi^\pm_i(p:q).
$$

Consider $\frac{1+\alpha}{2}$ an integer (say, $\alpha=2k-1$ for $k\in\{2, 3, \ldots\}$).
We can directly apply Theorem~\ref{thm:polygen} for the polynomial generator $f_{2k-1}(u)=\frac{1}{4k(1-k)} (1-u^k)$.

In that case, the iterated derivatives $f_{2k-1}^{(i)}$ are zero when $i>2k-1$.
Indeed, we have 
\begin{eqnarray*}
f_{2k-1}(u) &=&\frac{1}{4k(1-k)} (1-u^k)\\
f_{2k-1}^{(1)}(u) &=& \frac{u^{k-1}}{4(k-1)}\\
f_{2k-1}^{(2)}(u) &=& \frac{u^{k-2}}{4}\\
\vdots &=& \vdots\\
f_{2k-1}^{(i)}(u)&=&  \frac{(k-2)!}{(k-i)!} \frac{u^{k-i}}{4}\\
\vdots &=& \vdots\\
f_{2k-1}^{(k+1)}(u)&=&0.
\end{eqnarray*}
Hence, we have $f_{2k-1}^{(i)}(u)=0$ for $i>k>0$.
Notice that $f_{2k-1}^{(2)}(u)$ is positive on $u>0$, hence $f_{2k-1}$ is strictly convex.

Thus the $(2k-1)$-divergences have finite chi expansions:
\begin{eqnarray}
I_{2k-1}(p:q) &=& \sum_{i=2}^{k+1}  \frac{(k-2)!}{i! (k-i)!} \frac{1}{4} \chi^\pm_i(p:q),\\
&=& \frac{1}{k(1-k)} \left(1-\int p^{1-k}(x) q^{k}(x) \dmu(x) \right).
\end{eqnarray}

The $\alpha$-divergences between members $p=f(x;\theta_p)$ and $q=f(x;\theta_q)$ of the same affine exponential family 
admit the following closed-form formula for any $\alpha\in\bbR\backslash\{\pm 1\}$:

\begin{equation}
\impformula{
I_\alpha(p:q)= \frac{4}{1-\alpha^2} \left(1-\exp\left({F\left(\frac{1-\alpha}{2}\theta_p+\frac{1+\alpha}{2}\theta_q\right)-\left(\frac{1-\alpha}{2}F(\theta_p)+\frac{1+\alpha}{2}F(\theta_q)\right)}\right) \right).
}
\end{equation}

This formula matches the formula obtained by the chi expansion when $\alpha=2k-1$:
\begin{eqnarray*}
I_{2k-1}(p:q) &=& \sum_{i=2}^{k+1}  \frac{(k-2)!}{i! (k-i)!} \frac{1}{4}  \sum_{j=0}^i {(-1)}^{i-j} \binom{i}{j}   \exp\left({F\left((1-j)\theta_p+j\theta_q\right)-\left((1-j)F(\theta_p)+jF(\theta_q)\right)}\right),\\
&=&   \frac{1}{4 k (k-1)}   \sum_{i=2}^{k+1} \binom{k}{i} \sum_{j=0}^i {(-1)}^{i-j} \binom{i}{j}   \exp\left({F\left((1-j)\theta_p+j\theta_q\right)-\left((1-j)F(\theta_p)+jF(\theta_q)\right)}\right),
 \end{eqnarray*}
 since $\frac{(k-2)!}{i! (k-i)!}= \frac{1}{k (k-1)}  \binom{k}{i}$.

\section{Analytic formula and power chi series}

A function (or formula) is said analytic if it is locally defined by a convergent power series.
For example, the exponential function $\exp(x)=\sum_{i=0}^\infty \frac{x^i}{i!}$ 
or the function
$\sin(x) = \sum_{i=0}^\infty \frac{(-1)^i x^{2i+1}}{(2i + 1)!}$
are (globally) analytic on $\bbR$.
We have $\log (1+x)= \sum_{i=1}^\infty (-1)^{i+1} \frac{1}{i} x^i$ for $|x|<1$ (locally analytic at center $x=0$ with radius $1$).
That is, when $|x|\geq 1$, the power series $\sum_{i=1}^\infty (-1)^{i+1} \frac{1}{i} x^i$ diverge.
The series $\sum_{i=1}^\infty (-1)^{i+1} \frac{1}{i} x^i$ is a {\em power series} for the function $\log (1+x)$.
A Taylor expansion yields in the limit a {\em Taylor series} when the Taylor polynomials converge and the Taylor remainder tends to zero.
Log-normalizers of exponential families are analytic functions. 
Similarly, a chi expansion tends to a chi series when the remainder tends to zero and the weighted sum of power chi pseudo-distances converges.
Let us mention that there are cases where $I_f(p:q)$ may admit a power chi series but not $I_f(q:p)$ (or the reverse divergence $I_f^r(p:q)=I_f(q:p)$).

The JS divergence between two Gaussian distributions cannot be approximated by a finite  chi expansion because the sum of
weighted $i$-th power chi divergences diverges:
$$
\JS(m_1,m_2)\not= \sum_{i=2}^{\infty} (-1)^{i-2} \frac{1}{i(i-1)} \left(1-\frac{1}{2^{i-1}}\right) 
\sum_{j=0}^i (-1)^{i-j} \binom{i}{j}   \exp\left(\frac{j(j-1)}{2}\|m_1-m_2\|^2\right).
$$

This fact is in accordance with~\cite{Watanabe-2004} which proved that the KL divergence between Gaussian Mixture Models (GMMs) is {\em not} analytic. See Appendix~A of~\cite{GMMLSE-2016} for an English translation of the non-analytic KL GMM Japanese proof of~\cite{Watanabe-2004}.

Let us state a {\em selected} part of the Theorem~1~of~\cite{FdivTaylor-2002}:

\begin{theorem}[\cite{FdivTaylor-2002}]\label{thm:boundedratio}
Let $f:(0,\infty)\rightarrow \bbR$ be $k$ times differentiable and such that $f^{(k)}$ is absolutely continuous on $[m,M]$,
where $0<m\leq 1\leq M<\infty$, and assume that $m\leq \frac{q(x)}{p(x)}\leq M$ almost surely on $\calX$, then we have
$$
\left|I_f(p:q)-\sum_{i=2}^k  \frac{f^{(i)}(1)}{i!} \chi_i^\pm(p:q) \right| \leq \frac{1}{(k+1)!} \|f^{(k+1)}\|_{\infty}
\chi_{k+1}(p:q),
$$
where 
$$
\chi_{s}(p:q) = \int \frac{|q(x)-p(x)|^{s-1}}{p^s(x)} \dmu(x),\quad s\geq 2
$$
and $\|f^{(k+1)}\|_{\infty}=\ess \sup_{u\in [m,M]} |f^{(k+1)}(u)|$ is the Lebesgue $L_\infty$-norm (and $f^{(k+1)}\in L_\infty[m,M]$).
Furthermore, we have $\chi_{k+1}(p:q) \leq (M-m)^{k+1}$.
\end{theorem}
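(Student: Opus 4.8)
The plan is to substitute Taylor's theorem with integral remainder, centered at $u=1$, into the defining integral $I_f(p:q)=\int_\calX p(x)f(q(x)/p(x))\dmu(x)$, and then to control the remainder by an $L_\infty$ bound on $f^{(k+1)}$ over $[m,M]$. First I would record the integral form of Taylor's formula: since $f$ is $k$ times differentiable with $f^{(k)}$ absolutely continuous on $[m,M]$, the derivative $f^{(k+1)}$ exists almost everywhere on $[m,M]$ and lies in $L_\infty[m,M]$, and for every $u\in[m,M]$
$$f(u)=\sum_{i=0}^{k}\frac{f^{(i)}(1)}{i!}(u-1)^i+R_k(u),\qquad R_k(u)=\frac{1}{k!}\int_1^u (u-t)^k\,f^{(k+1)}(t)\,\mathrm{d}t .$$
This identity is the iterated fundamental theorem of calculus for the Lebesgue integral together with integration by parts; the absolute continuity of $f^{(k)}$ (rather than continuity of $f^{(k+1)}$) is exactly what makes it legitimate. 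Since $f(1)=f'(1)=0$, the $i=0$ and $i=1$ terms disappear from the sum.

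Next I would estimate the remainder. As $m\le 1\le M$ and $m\le u\le M$, the segment joining $1$ and $u$ lies inside $[m,M]$, so $|f^{(k+1)}(t)|\le\|f^{(k+1)}\|_\infty$ for almost every $t$ on that segment, whence
$$|R_k(u)|\le\frac{\|f^{(k+1)}\|_\infty}{k!}\left|\int_1^u|u-t|^k\,\mathrm{d}t\right|=\frac{\|f^{(k+1)}\|_\infty}{(k+1)!}\,|u-1|^{k+1}.$$
Then I transport this bound to the divergence. Under the bounded-ratio assumption $m\le q(x)/p(x)\le M$ a.s., each integrand $p(x)\left(\frac{q(x)}{p(x)}-1\right)^i$ is dominated in absolute value by $p(x)(M-m)^i$, so every $\chi_i^\pm(p:q)$ is finite and $I_{f,k}^\chi(p:q)$ is well defined; substituting $u=q(x)/p(x)$ in the Taylor identity and multiplying by $p(x)$ shows that $p(x)f(q(x)/p(x))$ is $\mu$-integrable with
$$I_f(p:q)-I_{f,k}^\chi(p:q)=\int_\calX p(x)\,R_k\!\left(\frac{q(x)}{p(x)}\right)\dmu(x).$$
Taking the absolute value inside the integral and inserting the remainder estimate yields
$$\left|I_f(p:q)-I_{f,k}^\chi(p:q)\right|\le\frac{\|f^{(k+1)}\|_\infty}{(k+1)!}\int_\calX p(x)\left|\frac{q(x)}{p(x)}-1\right|^{k+1}\dmu(x)=\frac{\|f^{(k+1)}\|_\infty}{(k+1)!}\,\chi_{k+1}(p:q),$$
which is the asserted inequality.

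For the final clause I would note that $m\le q/p\le M$ together with $m\le 1\le M$ forces $\left|\frac{q(x)}{p(x)}-1\right|\le\max(1-m,\,M-1)\le M-m$ for $\mu$-a.e.\ $x$, so $\chi_{k+1}(p:q)=\int_\calX p(x)\left|\frac{q(x)}{p(x)}-1\right|^{k+1}\dmu(x)\le(M-m)^{k+1}\int_\calX p\,\dmu=(M-m)^{k+1}$. I expect the only genuinely delicate step to be the first one: invoking Taylor's integral remainder in its Lebesgue version under the weak regularity hypothesis on $f^{(k)}$, and replacing $f^{(k+1)}$ by its essential supremum up to null sets. Everything after that is the triangle inequality for integrals and the routine boundedness bookkeeping supplied by the bounded-ratio hypothesis.
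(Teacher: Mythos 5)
Your proof is correct and follows essentially the same route as the cited source \cite{FdivTaylor-2002}, which the paper imports without reproducing a proof: Taylor's formula with integral remainder centered at $u=1$ (legitimate under absolute continuity of $f^{(k)}$), an $L_\infty$ bound on $f^{(k+1)}$ over $[m,M]$, and the triangle inequality after multiplying by $p(x)$ and integrating, with the $i=0,1$ terms vanishing because $f(1)=0$ and $\int(q-p)\,\dmu=0$. Note only that you have (rightly) read $\chi_{k+1}(p:q)$ as $\int |q-p|^{k+1}p^{-k}\,\dmu$; the exponents in the theorem's displayed definition of $\chi_s$ appear transposed, since with the literal $\int|q-p|^{s-1}p^{-s}\,\dmu$ neither your remainder identity nor the final bound $\chi_{k+1}(p:q)\le(M-m)^{k+1}$ would come out as stated.
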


Observe that for bounded ratio densities, we have
$$
(-1)^i (1-m)^i \leq \chi^\pm_i(p:q) \leq (M-1)^i,
$$
and
$$
\chi_i(p:q) \leq \min((M-1)^i,(1-m)^i).
$$

Let us get a better upper bound:
We have $\frac{|q(x)-p(x)|^i}{p(x)^i}= |\frac{q}{p}-1| \frac{|q(x)-p(x)|^{i-1}}{p(x)^{i-1}}$, and 
$m-1\leq \frac{q}{p}-1\leq M-1$.
Therefore $|\frac{q}{p}-1| \leq \min(M-1,1-m)\leq M-m$.
It follows that $\chi_i(p:q) \leq (M-m)\chi_{i-1}(p:q)$ and $\chi_2(p:q)\leq (M-m) \int |q(x)-p(x)| \dmu(x)\leq (M-m)^2$.
Therefore $\chi_i(p:q)\leq (M-m)^i$.
Thus when $M<m+1$, we have $\lim_{i\leftarrow\infty} \chi_i(p:q) =0$.

To ensure that the Taylor remainder of the power chi series converge to zero, we need to have 
$$
\lim_{k\rightarrow\infty} \frac{1}{(k+1)!} \|f^{(k+1)}\|_{\infty} (M-m)^{k+1} =0.
$$
This is a sufficient (but not necessary) condition.

Notice that using Stirling approximation formula we have $(k+1)! \simeq \sqrt{2\pi(k+1)} \exp((k+1)\log \frac{(k+1)}{e})$  for large values of $k$.

This theorem does not apply to Gaussian distributions because we cannot bound the density ratio of two Gaussians.
However, we can {\em truncate} the Gaussians distributions on a compact support $\calD\subset\calX$, 
and define the extrema of the density ratio of two Gaussians as follows:
\begin{eqnarray*}
m(\theta_1:\theta_2) &=& \inf_{x\in\calD} \frac{p_\calD(x;\theta_2)}{p_\calD(x;\theta_1)},\\
M(\theta_1:\theta_2) &=& \sup_{x\in\calD} \frac{p_\calD(x;\theta_2)}{p_\calD(x;\theta_1)},
\end{eqnarray*}
where $p_\calD$ is the truncated distribution~\cite{TEF-2017}:
$$
p_\calD(x;\theta)=\frac{p(x;\theta)}{W_\calD(\theta)} = \frac{\exp(t(x)^\top\theta + k(x))}{\int_\calD \exp(t(x)^\top\theta + k(x))}\ 1_{\calD}(x),
$$
where $W_\calD(\theta)$ denotes the probability mass inside the domain $\calD$: $W_\calD(\theta)=\int_\calD p(x;\theta)\dmu(x)$.
The truncation of a regular (i.e., topologically open natural parameter space) and steep exponential family (i.e. mean parameter space 
coinciding with the interior of the closed convex hull of the support of the distributions, like the exponential family of Gaussian distributions) may not yield a regular and steep exponential family, see~\cite{truncGaussian-1994}.

Sufficient conditions to apply Theorem~\ref{thm:boundedratio} in order to get an analytic formula are to check that
\begin{enumerate}
\item  the remainder tends to zero when $k\rightarrow 0$, and
\item the power chi series converges when $k\rightarrow 0$.
\end{enumerate}

Notice that these two requirements are always fulfilled for finite discrete distributions for bounded $|f^{(i)}|$'s.


\subsection{Vanilla case study: The Bernoulli distributions}

Let $p=(\lambda_p,1-\lambda_p)$ and $q=(\lambda_q,1-\lambda_q)$ be two Bernoulli distributions (i.e., categorical distributions with two choices) with parameters $\lambda_p,\lambda_q\in (0,1)$. 
The $f$-divergence between $p$ and $q$ is
$$
I_f(p:q) = \lambda_p f\left(\frac{\lambda_q}{\lambda_p}\right) +  (1-\lambda_p) f\left(\frac{1-\lambda_q}{1-\lambda_p}\right).
$$

Let $m(\lambda_p:\lambda_q)=\min(\frac{\lambda_q}{\lambda_p},\frac{1-\lambda_q}{1-\lambda_p})$ 
and $M(\lambda_p:\lambda_q)=\max(\frac{\lambda_q}{\lambda_p},\frac{1-\lambda_q}{1-\lambda_p})$.

We have 
$$
\chi^\pm_{i}(\lambda_p:\lambda_q)=
 \frac{(\lambda_q-\lambda_p)^{i-1}}{\lambda_p^{i-1}} + \frac{(\lambda_p-\lambda_q)^{i-1}}{(1-\lambda_p)^{i-1}}
$$
That is, we do not use the binomial expansion for computing the chi pseudo-distances.

We consider the $f$-divergence for the exponential generator $f_{\mathrm{exp}}(u)=e^x-ex$ with
the $k$-th power chi expansion:  
$I_{f_{\mathrm{exp},k}}^\chi(\lambda_p:\lambda_q)  = \sum_{i=2}^k  \frac{e}{i!}   \chi_i^\pm(\lambda_p:\lambda_q)$.

When $k\rightarrow\infty$, the remainder  tends to zero, and the power chi expansions converge to $I_{f_{\mathrm{exp}}}(\lambda_p:\lambda_q)$:
$I_{f_{\mathrm{exp}}}^\chi(\lambda_p:\lambda_q)  = \sum_{i=2}^\infty  \frac{e}{i!}   \chi_i^\pm(\lambda_p:\lambda_q)$.

Let $\lambda_p=\eps\leq\frac{1}{2}$ and $\lambda_q=1-\eps$.
Then $m=\frac{\eps}{1-\eps}$ and $M=\frac{1-\eps}{1-\eps}$ 
so that $M-m=\frac{1-2\eps}{\eps(1-\eps)}$.

We have
$$
\chi^\pm_i(p:q)=  \frac{(1-2\eps)^{i}}{\eps^{i-1}} + \frac{(2\eps-1)^{i}}{(1-\eps)^{i-1}}.
$$

Since we have $\|f^{(k+1)}\|_\infty=\exp(\frac{1-\eps}{1-\eps})$, it follows that the remainder is upper bounded by
$$
\frac{\exp(\frac{1-\eps}{1-\eps})}{(k+1)!} \left(\frac{1-2\eps}{\eps(1-\eps)}\right)^{k+1}.
$$

In practice, we face numerical precision for computing the factorials $i!$ and large power of numbers close to zero.
In our experiments, we compute the $k$-th power chi expansions for $k=30$.

In our Java implementation,  the run for $\lambda_p=0.9$ and $\lambda_q=0.3$ yields
the exponential divergence $E(\lambda_p:\lambda_q)=108.20108519696437$ (closed-form formula), and  we list the experimental results obtained for the power chi pseudo-distances and approximations by power chi series in Table~\ref{tab:resexp}.

\begin{table}
{\small
\begin{tabular}{ll}
$
\begin{array}{ll}
i &  \chi_i(\lambda_p=0.9,\lambda_q=0.3) \\ \hline
2 & 4.000000000000002 \\
3 & 21.333333333333357 \\
4 & 129.77777777777794 \\
5 & 777.4814814814827 \\
6 & 4665.6790123456885 \\
7 & 27993.547325102943 \\
8 & 167961.6351165985 \\
9 & 1007769.5765889378 \\
10 & 6046617.615607398 \\
11 & 3.627970558959522\  10^7 \\
12 & 2.1767823360693753\  10^8 \\
13 & 1.3060694015953817\  10^9 \\
14 & 7.836416409603122\  10^9 \\
15 & 4.70184984575982 \ 10^{10} \\
16 & 2.821109907456029\  10^{11} \\
17 & 1.6926659444736094\  10^{12} \\
18 & 1.0155995666841666\  10^{13} \\
19 & 6.093597400105001\  10^{13} \\
20 & 3.6561584400630025\  10^{14} \\
21 & 2.1936950640378022\  10^{15} \\
22 & 1.3162170384226816\  10^{16} \\
23 & 7.8973022305360928\  10^{16} \\
24 & 4.7383813383216576\  10^{17} \\
25 & 2.8430288029929964\  10^{18} \\
26 & 1.7058172817957982\  10^{19} \\
27 & 1.0234903690774793\  10^{20} \\
28 & 6.140942214464877\  10^{20} \\
29 & 3.684565328678928\  10^{21} \\
30 & 2.2107391972073577\ 10^{22} 
\end{array} 
$
&
$
\begin{array}{lll}
i & E_i^\chi(\lambda_p=0.9,\lambda_q=0.3) & |R_i^\chi\chi(\lambda_p=0.9,\lambda_q=0.3)| \\ \hline
2 & 5.436563656918093 & 102.76452154004627 \\
3 & 15.101565713661374 & 93.099519483303 \\
4 & 29.800423008291787 & 78.40066218867258 \\
5 & 47.412204533912885 & 60.788880663051486 \\
6 & 65.02696908486016 & 43.174116112104215 \\
7 & 80.1250546023077 & 28.07603059465667 \\
8 & 91.44864241519754 & 16.75244278176683 \\
9 & 98.9976992034349 & 9.203385993529466 \\
10 & 103.52713339328994 & 4.673951803674427 \\
11 & 105.99773385339799 & 2.203351343566382 \\
12 & 107.23303408384565 & 0.9680511131187188 \\
13 & 107.8031726517244 & 0.39791254523997566 \\
14 & 108.04751775224481 & 0.15356744471955608 \\
15 & 108.14525579245294 & 0.05582940451142804 \\
16 & 108.18190755753099 & 0.019177639433380023 \\
17 & 108.19484347461736 & 0.006241722347013479 \\
18 & 108.19915544697947 & 0.0019297499848960342 \\
19 & 108.20051712246224 & 5.68074502126592\  10^{-4} \\
20 & 108.20092562510708 & 1.595718572957594\  10^{-4} \\
21 & 108.20104234014846 & 4.2856815909431134\  10^{-5} \\
22 & 108.20107417152339 & 1.102544098330327\  10^{-5} \\
23 & 108.20108247536032 & 2.721604047906112\  10^{-6} \\
24 & 108.20108455131955 & 6.456448176095364\  10^{-7} \\
25 & 108.20108504954977 & 1.4741459608558216\  10^{-7} \\
26 & 108.20108516452598 & 3.243839330480114\  10^{-8} \\
27 & 108.20108519007624 & 6.888129178150848\  10^{-9} \\
28 & 108.2010851955513 & 1.4130705494608264\  10^{-9} \\
29 & 108.20108519668408 & 2.802948984026443\  10^{-10} \\
30 & 108.20108519691063 & 5.374545253289398\  10^{-11}
\end{array}
$
\end{tabular}
}
\caption{Experimental results of the power chi expansions for the exponential $f$-divergence between Bernoulli distributions.
\label{tab:resexp}}
\end{table}

We report our implementation in {\sc Maxima}\footnote{\url{http://maxima.sourceforge.net/}} which can use high-precision arithmetic 
in Appendix~\ref{appendix:expf}.

For the JS generator, the series may either converge (e.g., $\lambda_p=0.1$ and $\lambda_q=0.05$) or diverge 
(e.g.  $\lambda_p=0.05$ and $\lambda_q=0.85$).

This vanilla case study can be extended to the case of multinoulli distributions (e.g., discrete distributions with $S$ choices or multinomials with one trial).

Computing $N$ $f$-divergences (for generators $f_1, \ldots, f_n$) between two discrete distributions with $S$ choices require
 $O(NS)$ operations. By precomputing a base of $K$ pseudo-distances  $\chi_2^\pm, \ldots, \chi_{k-1}^\pm$, we can approximate those $N$ $f$-divergences in $O(KN)$-time. This  yields a fast approximation scheme for batch approximations of $f$-divergences when $K\ll S$.

\subsection{Case study: Poisson distributions}

The probability mass function (pmf) of a Poisson distribution with parameter $\lambda$ is 
$$
f(x;\lambda) = \frac{\lambda^x e^{-\lambda}}{x!},
$$ 
for $x\in \bbN_0=\bbN\cup\{0\}=\{0,1,2,\ldots\}$.
The Poisson distributions form a {\em Discrete Exponential Family}~\cite{MLE-DEF-1971} (DEF) with 
natural parameter $\theta=\log\lambda$ (with $\theta\in\mathbb{R}$) and log-normalizer $F(\theta)=e^\theta$.
That is, we can rewrite the pmf as 
$$
f(x;\theta) = \exp(x\theta-e^{\theta}-\log x!),
$$ 
with $\theta=\log \lambda$ the natural parameter~\cite{EF-2009} belonging to $\bbR$.
Thus the Poisson family is a discrete affine exponential family.
The power chi pseudo-distance between two Poissons distributions~\cite{ChiFDiv-2014} of parameters $\lambda_1$ and $\lambda_2$ is given by
$$
\chi_k^\pm(\lambda_1:\lambda_2) = \sum_{j=0}^k (-1)^{k-j} \binom{k}{j}  e^{\lambda_1  \left(\frac{\lambda_2}{\lambda_1}\right)^{j} + (j-1)\lambda_1 -  j\lambda_2}.
$$

The ratio density between two Poisson distributions is
$$
\left(\frac{\lambda_q}{\lambda_p}\right)^x \exp(\lambda_p-\lambda_q) = \exp(x(\theta_q-\theta_p)-e^{\theta_q}+e^{\theta_p}).
$$ 

Consider $\lambda_q<\lambda_p$ so that $\frac{\lambda_q}{\lambda_p}<1$ (or equivalently $\theta_q<\theta_p$).
The ratio is then maximized at $x=0$, and we have $M(\lambda_p:\lambda_q)=\exp(\lambda_p-\lambda_q)$.
Notice that $1<M(\lambda_p:\lambda_q)<\infty$.
The ratio is lower bounded by $0$ (i.e., $m(\lambda_p:\lambda_q)=0$).

Thus the exponential $f$-divergence $E(p:q)$ can be expressed by a series when $\lambda_q<\lambda_p$:

$$
E_k^\chi(\lambda_p:\lambda_q)= \sum_{i=2}^k  \frac{e}{i!}   \chi_i^\pm(\lambda_p:\lambda_q),\quad \lambda_q<\lambda_p,
$$

and the remainder is $R_{E,k}^\chi(p:q)\leq \frac{1}{(k+1)!}  \|f^{(k+1)}(u)\|_{\infty}  M(\lambda_p:\lambda_q)^{k+1}$.
Here, the problem is that $\|f^{(k+1)}(u)\|_{\infty}$ is not bounded since $f^{(k+1)}(u)=\exp(u)$.

However, if we truncate the support to $[a, a+1, \dots, b-1, b]$, we can bound $\|f^{(k+1)}(u)\|_{\infty}$ and calculate the $\chi_i^\pm$ pseudo-distances because we deal with the case of finite distributions.

\subsection{Case study: The truncated exponential distributions}
Consider the double-sided truncated exponential distributions:
The density of an exponential distribution is $p(x;\lambda)=\lambda e^{-\lambda x}$ for $\lambda>0$ on the support $(0,\infty)$.
Its cumulative distribution function is $\CDF(x;\lambda)=1-e^{-\lambda x}$.
The exponential distributions form an exponential family~\cite{EF-2009} with $t(x)=-x$, $\theta=\lambda$ and $F(\theta)=-\log \theta$ (a convex log-normalizer).
The mass function is $W_\calD(\theta)=e^{-a\theta}-e^{-b\theta}$ for an interval domain $\calD=[a,b]$.
The double-sided truncated exponential distributions has density:
$$
\exp\left({-\theta x-\log \frac{e^{-a\theta}-e^{-b\theta}}{\theta}}\right).
$$

When $b=\infty$, we get singly truncated exponential distributions:
We have $W_\calD(\theta) = e^{-a\theta}$ and 
$$
p_\calD(x;\theta) = \theta\exp( -\theta x+a\theta).
$$
This is an exponential family with log-normalizer $F_\calD(\theta)=a\theta-\log\theta$.

Symbolic computations yield closed-form solutions for the $\chi^\pm_i$ divergences (see Appendix).
For example, we have when $\theta_2>\frac{2}{3}\theta_1$:

$$
\chi^\pm_3(p_\calD(x;\theta_1):p_\calD(x;\theta_2))=
\frac{2 {{\theta_2}^{4}}-10 \theta_1\, {{\theta_2}^{3}}+18 {{\theta_1}^{2}}\, {{\theta_2}^{2}}-14 {{\theta_1}^{3}}\, \theta_2+4 {{\theta_1}^{4}}}{{{\theta_1}^{2}}\, \left( 6 {{\theta_2}^{2}}-7 \theta_1\, \theta_2+2 {{\theta_1}^{2}}\right) }.
$$

\section{Conclusion and discussion}


\begin{sidewaystable}
\centering
{
\begin{tabular}{l||l|ll}
$I_f$ divergence name & generator $f$ & $k$-order power chi expansion\\ \hline\hline
generic $I_f$ & generic convex generator $f$ &  $I_{f,k}^\chi(p:q) = \sum_{i=2}^k \frac{f^{(i)}(1)}{i!} \chi_i^\pm(p:q)$\\ \hline
polynomial & $f_P(u)=\sum_{j=0}^d a_ju^j$ & $I_{f_P,k}^\chi(p:q) = \sum_{i=2}^d \left(\sum_{j=i}^d  a_j  \binom{j}{i}\right) \chi^\pm_i(p:q)$ \\
$\KL$ &  $f_\KL(u)=-\log u$ & $\KL_k^\chi(p:q) = \sum_{i=2}^k \frac{(-1)^i}{i} \chi_i^\pm(p:q)$\\ 
reverse $\KL$ & $f_{\KL^r}(u)=u\log u$ &  ${\KL^r}_k^\chi(p:q) = \sum_{i=2}^k  {(-1)}^{i} \frac{1}{i(i-1)} \chi_i^\pm(p:q)$\\	
$\alpha$-divergence & $f_\alpha(u)=\frac{4}{1-\alpha^2}(1-u^{\frac{1+\alpha}{2}})$ & $I_{\alpha,k}^\chi(p:q) =  \sum_{i=2}^k  -\frac{2}{1-\alpha} \frac{1}{i!} \binom{\frac{1+\alpha}{2}}{i} \chi^\pm_i(p:q)$\\
Jeffreys &   $f_J(u)=(u-1)\log u$  & $J_k^\chi(p:q) = \sum_{i=2}^k (-1)^{i}  \frac{1}{i-1}  \chi_i^\pm(p:q)$\\
Jensen-Shannon & $f_\JS(u)=-(u+1)\log \frac{1+u}{2} + u\log u$ & $\JS_k^\chi(p:q) = \sum_{i=2}^k (-1)^{i-2} \frac{1}{i(i-1)} \left(1-\frac{1}{2^{i-1}}\right)  \chi_i^\pm(p:q)$\\
Harmonic & $f_H(u)=\frac{2u}{u+1}$ & $H_k^\chi(p:q) =  1 + \sum_{i=2}^k (-1)^{i+1} \frac{1}{2^{i}}   \chi_i^\pm(p:q)$\\
Exponential & $f_{\mathrm{exp}}(u)=e^x-x$ & $E_k^\chi(p:q)= \sum_{i=2}^k  \frac{e}{i!}   \chi_i^\pm(p:q)$\\ \hline
\end{tabular}
}

\caption{Examples of finite power chi expansions for common divergences.
\label{tab:chinomial}}

\end{sidewaystable}

On one hand, it has been proven that the Kullback-Leibler divergence between Gaussians mixture models 
is in general not analytic~\cite{Watanabe-2004}.
On the other hand, we can express the $f$-divergences using power chi expansions~\cite{ChiFDiv-2014} yielding to power chi series when these expansions converge.
When the $f$-divergence generator is a polynomial, we obtain a closed-form formula for the $f$-divergence between members of the same exponential families with {\em affine} natural parameter space (e.g. isotropic Gaussian, Poisson, von Mises-Fisher, etc.) by using the finite power chi expansions.
This polynomial case includes the $\alpha$-divergences for $\alpha=2k-1$ where $k$ is an integer greater than $2$.

Table~\ref{tab:chinomial} summarizes the power chi expansions of some common $f$-divergences.
Observe that when we precompute or approximate the first $k$ power chi pseudo-distances, 
we can calculate or approximate quickly  the $k$-order power chi expansions of $f$-divergences using 
the basis of $\chi^\pm_i$ pseudo-distances (see Table~\ref{tab:chinomial}).
Thus, we can efficiently  approximate a batch  of $f$-divergences using $\chi^\pm_i$ look-up tables.

Truncating distributions~\cite{truncGaussian-1989,truncGaussian-1994} on a finite compact domain let us bound both the ratio of densities and the chi pseudo-distances,  potentially yielding approximation formul\ae{} for the $f$-divergences. 
Another direction is to approximate the $f$-generator by a polynomial~\cite{ApproxTheory-2003} and then to apply the finite chi expansion of the $f$-divergence induced by that polynomial approximation of the generator.

Finally, to conclude, let us mention that it is interesting to find examples where the  entropy of mixture distributions is analytic (besides the case of mutinomial distributions) since this would also yield analytic formula for the Jensen-Shannon divergence according to Eq.~\ref{eq:jsh}.

\vskip 1cm
\noindent Additional materials are available online at:\\
\centerline{\url{https://franknielsen.github.io/PowerChiExpansionsFdiv/}}

\appendix

\section{Information geometry and $f$-divergences\label{sec:stdfdiv}}

Since $I_{\beta f}=\beta I_{f}$ for $\beta>0$, we fix the scale of the $f$-divergence by choosing $\beta$ that $\beta f''(1)=1$.
Indeed, theory and applications usually consider relative distances instead of absolute distances.
Thus we select the representative $f$-divergence in the family $\{f_{\alpha,\beta}(u)=\beta (f(u)+\alpha(u-1))\}$ of generators (with $f(1)=0$) such that $f'(1)=0$ and $f''(1)=1$. Such a $f$-divergence is called a standard $f$-divergence in Amari's textbook~\cite{IG-2016} (p. 56):
 The standard $f$-divergences are invariant divergences (\cite{IG-2016}, p. 54) which satisfies the property of information monotonicity by coarse-graining. 
At infinitesimal scale, a first-order Taylor expansion of the standard $f$-divergence between distributions belonging to a parametric family $\{p(x;\theta)\}$ yields half of the squared Mahalanobis distance\footnote{For a positive definite matrix $Q\succ 0$, the  Mahalanobis  metric (distance) is $M_Q(p,q)=\sqrt{(p-q)^\top Q (p-q)}$.} for the Fisher information matrix $G_\theta$
(\cite{IG-2016}, p. 62):
$$
I_f(p(x;\theta):p(x;\theta+\dtheta))= \frac{1}{2} M_{G_\theta}^2(\dtheta,\dtheta) = \frac{1}{2}  \dtheta^\top G_\theta \dtheta = \frac{1}{2}  \sum_{i,j}  g_{ij}(\theta) \dtheta_i \dtheta_j,
$$
where
$$
G_\theta = [g_{ij}(\theta)],\quad g_{ij}(\theta)=E_{p(x;\theta)}\left[\frac{\partial}{\partial\theta_i}\log p(x;\theta)\frac{\partial}{\partial\theta_j}\log p(x;\theta)\right].
$$

The $f$-divergence can be extended to positive measures\footnote{For standard $f$-divergence, we let $I_f(p^+:q^+)=I_f(\bar{p}:\bar{q})+q^+-p^+$,
 where $\bar{p}=\frac{p^+}{\int p(x)\dmu(x)}$ and $\bar{q}=\frac{q^+}{\int q(x)\dmu(x)}$ are normalized probability densities.} instead of probability measures~\cite{DivEstimator-2006}.

The $f$-divergence is a separable divergence that can be written as: 
$$
I_f(p:q) = \int_\calX i_f(p(x):q(x))  \dmu(x),
$$
with $i_f(a:b)= a f\left(\frac{b}{a}\right)$ for $a,b>0$.
The function $i_f$ is called the perspective function~\cite{GeneralizedPerspectiveI-2005} of the convex function $f$.

\section{Some sanity checks using a computer algebra system}

We use the Computer Algebra System (CAS) {\sc Maxima}.\footnote{\url{http://maxima.sourceforge.net/}}
Some code has also been written in Python using Sympy~\cite{sympy}.

\subsection{Checking the closed-form formula for $\frac{f^{(k)}(1)}{k!}$}

\begin{verbatim}
/* k-th order derivative of the Jensen-Shannon f-generator
		evaluated at one and divided by k! */ 
k:23;
fJSderivative(i) := ((-1)**(i-2))*(1-1/(2**(i-1)))*(1/((i*(i-1))));
fJSderivative(23);

fJS(u) := -(u+1)*log((1+u)/2) + u*log(u);
at(diff(fJS(u),u,k),u=1)/k!;
\end{verbatim}

We find: $f_\JS^{(23)}(1)=-\frac{182361}{92274688}$.

\subsection{Chi pseudo-distances between truncated exponential distributions}

\begin{verbatim}
theta1 : 1;
theta2 : 2;
a : 2;
b : 3;
/* truncated exponential distribution */
p(x,theta) := theta*exp(-theta*x);
mass(theta) := exp(-a*theta)-exp(-b*theta);
q(x,theta) := theta*exp(-theta*x)/(exp(-a*theta)-exp(-b*theta));
k:10;
integrate(  (q(x,theta2)-q(x,theta1))**k/(q(x,theta1))**(k-1),x,a,b);
\end{verbatim}

\subsection{Power chi expansions for the exponential $f$-divergence between binomials\label{appendix:expf}}

\begin{verbatim}
/* Exponential f-divergence */
f(u) := exp(u)-(u*%e);
coeff(i) := %e/i!;

/* f-divergence between binomials */
fdiv(p,q) := (p*f(q/p)) + ((1-p)*f((1-q)/(1-p)));

/* i-order chi pseudo-distance */
chipm(i,p,q) :=  (((q-p)**i)/(p**(i-1))) + (((p-q)**i)/((1-p)**(i-1))) ;

/* power chi series */
fapprox(p,q,max) := sum(coeff(i)*chipm(i,p,q), i, 2,max);

/* test */ 
p: 0.2; q: 0.99;
result: float(fdiv(p,q));

for i: 2 step 1 thru 30 do 
bloc(
approx: float(fapprox(p,q,i)),
err: float(abs(result-approx)),
display(i,approx,result,err)
);
\end{verbatim}


\end{document}